\long\def\omitit#1{}
\newtheorem{definition}{Definition}
\newtheorem{theorem}{Theorem}
\newtheorem{proposition}{Proposition}
\begin{document}

\title{Multi-sensor Information Processing using
Prediction Market-based Belief Aggregation}

\author{Janyl Jumadinova (1), Prithviraj Dasgupta (1) \\
  ((1) University of Nebraska at Omaha)
}

\maketitle

We consider the problem of information fusion from multiple sensors
of different types with the objective of improving the
confidence of inference tasks, such as object classification,
performed from the data collected by the sensors. We propose
a novel technique based on distributed belief
aggregation using a multi-agent prediction market
to solve this information fusion problem. To monitor
the improvement in the confidence of the
object classification as well as to dis-incentivize
agents from misreporting information, we have introduced a
market maker that rewards the agents
instantaneously as well as at the end of
the inference task, based on the quality
of the submitted reports.
We have implemented the market maker's reward calculation
in the form of a {\em scoring rule} and have shown analytically that it
incentivizes truthful revelation or accurate reporting
by each agent. We have experimentally verified
our technique for multi-sensor information fusion
for an automated landmine detection scenario.
Our experimental results show that, for
identical data distributions and settings, using our
information aggregation technique increases the accuracy of object
classification favorably
as compared to two other commonly used techniques for information fusion
for landmine detection.








\section{Introduction}
\label{sec_intro}
Information fusion from multiple sensors has
been a central research topic in sensor-based systems \cite{Waltz}
and recently several multi-agent techniques \cite{Vinyals11}
have been proposed to address this problem.
Most of the solutions for multi-sensor information fusion and
processing are based on Bayesian inference
techniques \cite{Makarenko06,Osborne08,Rosencrantz03}.
While such techniques have been shown to be very effective, we
investigate a complimentary problem where sensors
can behave in a self-interested manner. Such self-interested
behavior can be motivated by
malicious nodes that might have been planted into the system to
subvert its operation, or, by normal sensor nodes attempting to
give an illusion of efficient performance when they do
not have enough resources (e.g., battery power) to
perform accurate measurements. To address this problem,
we describe a market-based aggregation technique
called a prediction market for multi-sensor information fusion
that includes a utility driven mechanism to motivate each sensor,
through its associated agent, to reveal accurate reports.

To motivate our problem we describe a distributed
automated landmine detection scenario used for
humanitarian demining. An environment contains
different buried objects, some of which could
potentially be landmines. A set of robots,
each equipped with one of three types
of landmine detection sensor such as
a metal detector (MD), or a ground penetrating radar (GPR)
or an infra-red (IR) heat sensor, are deployed
into this environment.
Each robot is capable of perceiving certain
features of a buried object through its sensor such as
the object's metal content, area, burial depth, etc.
However, the sensors give noisy readings
for each perceived feature depending on the characteristics
of the object as well as on the characteristics
of the environment (e.g., moisture content, ambient
temperature, sunlight, etc.). Consequently,
a sensor that works well in one scenario, fails to detect
landmines in a different scenario, and,
instead of a single sensor, multiple sensors
of different types, possibly with different
detection accuracies can detect landmines with
higher certainty \cite{Gros}.
Within this scenario, the central question that we intend to
answer is: given an initial set of reports about the
features of a buried object, what is
a suitable set (number and type) of sensors to
deploy over a certain time window to the object,
so that, over this time window,
the fused information from the different sensors
successively reduces the uncertainty in determining
the object's type.

Our work in this paper is based on the insight that
the scenario illustrated above, of fusing information from
multiple sources to predict the outcome of an initially unknown object,
is analogous to the problem of aggregating the beliefs
of different humans to forecast the outcome
of an initially unknown event. Such forecasting
is frequently encountered in many problems such
as predicting the outcome of geo-political events,
predicting the outcome of financial instruments like stocks, etc.
Recently, a market-based model called {\em prediction
market} has been shown to be very successful in
aiding humans with such predictions
and with decision-making
\cite{Chen09,Chen,Othman,Wolfers04}.
Building on these models, in this paper, we describe a
multi-agent prediction market
for multi-sensor information fusion.
Besides being an efficient aggregation mechanism,
using prediction markets gives us several useful features -
a mathematical formulation called a {\em scoring rule}
that deters malicious sensors from misreporting
information, a regression-based belief update mechanism for
the sensor agents for incorporating the aggregated beliefs
(or information estimates) of other sensors into
their own calculation, and the ability to incorporate an autonomous
decision maker that uses expert-level domain knowledge
to make utility maximizing decisions
to deploy additional sensors appropriately to improve the detection
of an object.
Our experimental results illustrated
with a landmine detection scenario while using
identical data distributions and settings, show
that the information fusion performed using our technique
reduces the root mean squared error by $5-13\%$
as compared to a previously studied technique  for
landmine data fusion using the Dempster-Shafer theory \cite{Nada03} and
by $3-8\%$ using distributed data fusion technique \cite{Manyika95}.

\section{Related Work}
{\bf Multi-agent Information Fusion.}
Multi-agent systems have been used to solve various sensor network
related problems and an excellent overview is given in \cite{Vinyals11}.
In the direction of multi-sensor information processing, significant works
include the use of particle filters\cite{Rosencrantz03},
distributed data fusion (DDF) architecture along with
its extension, the Bayesian DDF \cite{Makarenko06,Manyika95},
Gaussian processes \cite{Osborne08} and mobile agent-based information
fusion \cite{Wu04}. For most of the application domains
described in these works such as water-tide height measurement,
wind speed measurement, robot tracking and localization, etc.,
self-interested behavior by the sensors is not considered a crucial problem.
For our illustrated application domain of landmine detection,
decision-level fusion techniques have been reported to be
amenable for scenarios where the sensor types are different from each other,
and, non-statistical decision-level fusion techniques, such as Dempster-Shafer theory \cite{Nada03},
fuzzy logic\cite{Cremer}, and rule-based fusion techniques \cite{Gros} have been reported to
generalize well. However, in contrast to our work,
these techniques assume that sensors are fully cooperative
and never behave self-interestedly by misreporting information.
In \cite{Mukherjee10}, the authors have observed
that most sensor-based information aggregation techniques either
do not consider malicious behavior or use high-overhead,
cryptographic techniques to combat it. To deter false
reports by sensor nodes in a data aggregation setting,
they propose various lower overhead reputation-based schemes.
Our prediction market-based information aggregation technique
is complimentary to such reputation-based aggregation techniques.

{\bf Decision-Making using Prediction Markets.}
A prediction market is a market-based aggregation mechanism that is
used to combine the opinions on the outcome of a future, real-world
event from different people, called the market's {\em traders}
and forecast the event's possible
outcome based on their aggregated opinion.
Recently, multi-agent systems have been used
\cite{Chen09,jumadinova10,Othman} to analyze the
operation of prediction markets, where the behaviors of the market's
participants are implemented as automated software agents.
The seminal work on prediction market analysis
\cite{Wolfers04} has shown that the mean belief
values of individual traders about the outcome of a future event
corresponds to the event's market price. The basic operation
rules of a prediction market are similar to those of a continuous
double auction, with the role of the auctioneer being
taken up by an entity called the {\em market maker}
that runs the prediction market. Hanson \cite{Hanson07} developed a
mechanism, called a
\emph{scoring rule}, that can be used by market makers
to reward traders for making and improving a
prediction about the outcome of an event, and,
showed that if a scoring rule is {\em proper}
or incentive compatible, then it can serve as
an automated market maker. Recently, authors
in \cite{Chen,Othman} have theoretically analyzed
the properties of prediction markets used for
decision making. In \cite{Othman}, the authors
analyzed the problem of a decision
maker manipulating a prediction market and
proposed a family of scoring rules to address
the problem. In \cite{Chen}, the authors
extended this work by allowing randomized decision rules,
considering multiple possible outcomes
and providing a simple test to determine whether the scoring
rule is proper for an arbitrary decision rule-scoring rule pair.
In this paper, we use a prediction market
for decision making, but in contrast to
previous works we consider that the
decision maker can make multiple, possibly
improved decisions over an event's duration,
and, the outcome of an event is decided
independently, outside
the market, and not influenced by the decision
maker's decisions. Another
contribution our paper makes is a new, proper
scoring rule, called the \emph{payment function}, that
incentivizes agents to submit truthful reports.

\section{Problem Formulation}
\label{sec_problem}
Let $L$ be a set of objects. Each object has certain
features that determine its type. We assume that
there are $f$ different features and $m$ different
object types. Let $\Phi=\{\phi_1, \phi_2,...,\phi_f\}$
denote the set of object features and
$\Theta = \{\theta_1, \theta_2, ..., \theta_m\}$ denote the set of
object types. The features of an object $l \in L$
is denoted by $l_\Phi \subseteq \Phi$ and its
type is denoted by $l_\theta \in \Theta$.
As illustrated in the example given in Section \ref{sec_intro},
$l_\Phi$ can be perceived, albeit with measurement
errors, through sensors, and,
our objective is to determine $l_\theta$ as accurately
as possible from
the perceived but noisy values of $l_\Phi$. Let
$\Delta(\Theta)  = \{(\delta(\theta_1), \delta(\theta_2), ...,
\delta(\theta_m)): \delta(\theta_i)\in [0,1], \sum_{i=1}^m
\delta(\theta_i) = 1\}$, denote
the set of probability distributions over the different
object types. For convenience of analysis,
we assume that when the actual type of object $l$,
$l_\theta = \theta_j$, its (scalar) type
is expanded into a $m$-dimensional probability vector
using the function
$vec: \Theta \rightarrow [0,1]^m: vec_j = 1, vec_{i \neq j}=0$,
which has $1$ as its $j$-th component corresponding
to $l$'s type $\theta_j$ and $0$ for all other
components.

Let $A$ denote a set of agents (sensors) and
$A^{t,l}_{rep} \subseteq A$ denote the subset of agents
that are able to perceive the object $l$'s
features on their sensors at time $t$.
Based on the perceived
object features, agent $a \in A_{rep}^{t,l}$ at time $t$
reports a belief as a
probability distribution over the set of object types,
which is denoted as $\mathbf{b}^{a,t,l} \in \Delta(\Theta)$.
The beliefs of all the agents are combined into a
composite belief, $\mathbf{B}^{t,l} = Agg_{a \in A_{rep}^{t,l}} (\mathbf{b}^{a,t,l})$,
and let $\hat{\Theta}^{t,l}: \mathbf{B}^{t,l} \rightarrow \Delta(\Theta)$
denote a function that computes a probability
distribution over object types based on
the aggregated agent beliefs.
Within this setting we formulate the object
classification problem as a decision making
problem in the following manner:
given an object $l$ and an initial aggregated
belief $\mathbf{B}^{t,l}$ calculated from one or more
agent reports for that object, determine
a set of additional agents (sensors) that need
to be deployed at object $l$ such that
the following constraint is satisfied:
\begin{equation}
\min \, RMSE\left(\hat{\Theta}^{t,l}, vec(l_\theta)\right), \quad \mbox{for} \quad
t =1, 2,.... T
\label{eqn_objectivefunction}
\end{equation}
where $T$ is the time window for classifying an object $l$
and RMSE is the root mean square error given by
$RMSE (\mathbf{x}, \mathbf{y}) = \frac{\mid \mid x - y \mid \mid}{\sqrt{m}}$.
In other words, at every time step $t$,
the decision maker tries to select
a subset of agents such that
the root mean square error (RMSE) between the
estimated type of object $l$ and its actual type
is successively minimized.

The major components of the object classification problem described
above consists of two parts: integrating the reports from the different
sensors and making sensor deployment decisions based on those reports so that
the objective function given in Equation \ref{eqn_objectivefunction} is satisfied.
To address the first part, we have used distributed
information aggregation with a multi-agent prediction market,
while for the latter we have used an expected utility maximizing
decision-making framework. A schematic showing the different
components of our system and their interactions is shown
in Figure \ref{fig_diagram} and explained in the following sections.

\begin{figure}[h!]
\begin{center}
\hspace{-0.2in}
    \includegraphics[width=3.2in,angle=-90]{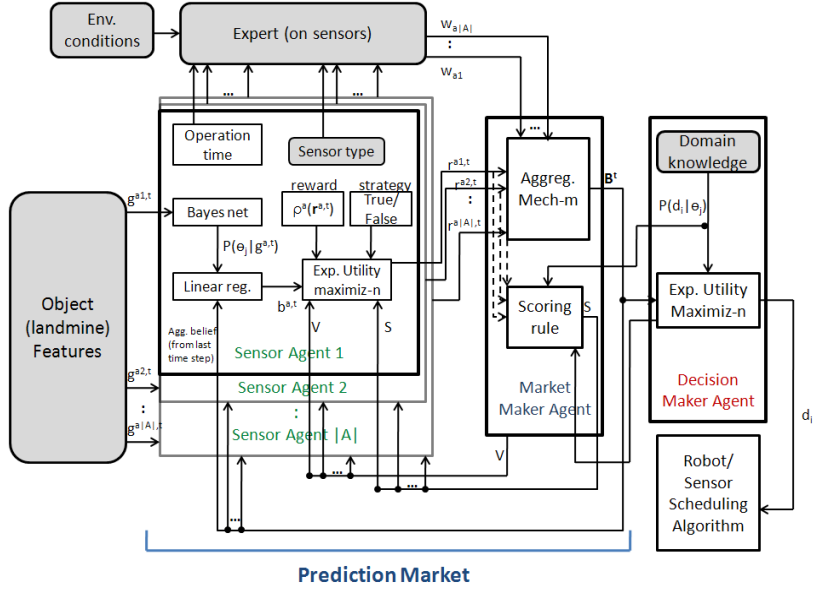}
\hspace{-0.2in}
\caption{The different components of the prediction
market for decision making and the interactions between them.}
\label{fig_diagram}
\end{center}
\end{figure}

\subsection{Sensor Agents}
\label{sec_agents}
As mentioned in Section \ref{sec_intro},
there is a set of robots in the
scenario and each robot has an on-board sensor
for analyzing the objects in the scenario.
Different robots can have different types of sensors
and sensors of the same type can have different
degrees of accuracy determined by their cost.
Every sensor is associated with a software agent
that runs on-board the robot and performs calculations
related to the data sensed by the robot's sensor.
In the rest of the paper,
we have used the terms sensor and agent interchangeably.
For the ease of notation, we drop the subscript $l$ corresponding
to an object for the rest of this section.
When an object is within the sensing range of
a sensor (agent) $a$ at time $t$, the sensor observes
the object's features and its agent receives this
observation in the form of an
information signal $g^{a,t} = <g_1,...,g_f>$
that is drawn from the space of information signals $G \subseteq \Delta(\Theta)$.
The conditional probability
distribution of object type $\theta_j$ given an information
signal $g \in G$,
$P(\theta_j|g): G \rightarrow [0,1]$,
is constructed using domain knowledge \cite{Cremer,Nada03,Nada01}
within a Bayesian network and is made available to each agent.
Agent $a$ then updates its belief distribution $\mathbf{b}^{a,t}$
using the following equation:
\begin{equation}
\mathbf{b}^{a,t} = w_{bel} \cdot \mathbf{P}(\Theta|g^{a,t}) + (1-w_{bel}) \cdot \mathbf{B}^t,
\label{belief_update}
\end{equation}
where $\mathbf{B}^t$ is the belief value vector aggregated from
all sensor reports.

{\bf Agent Rewards.}
Agents behave in a self-interested manner to ensure that
they give their `best' report using their available resources
including sensor, battery power, etc. However, some agents can behave maliciously,
either being planted or compromised to infiltrate the system and subvert
the object classification process, or, they might be trying to
give an illusion of being efficient when they do not have sufficient resources
to give an accurate report. An agent $a$ that submits a report at time $t$,
uses its belief distribution $\mathbf{b}^{a,t}$ to calculate the report
$\mathbf{r}^{a,t} = <r_1^{a,t},...,r_m^{a,t}> \in \Delta(\Theta)$.
An agent can have two strategies to make this report - truthful
or malicious. If the agent is truthful, its report corresponds to
its belief, i.e., $\mathbf{r}^{a,t} = \mathbf{b}^{a,t}$. But
if it is malicious,
it manipulates its report to reveal an inaccurate belief.
Each agent $a$ can update its report $\mathbf{r}^{a,t}$ within
the time window $T$ by obtaining new measurements from the object
and using Equation \ref{belief_update} to update its belief.
The report from an agent $a$ at time $t$ is analyzed
by a human or agent expert \cite{Nada03} to
assign a weight  $w^{a,t}$ depending
on the current environment conditions and
agent $a$'s sensor type's accuracy under those environment
conditions (e.g., rainy weather reduces the weight
assigned to the measurement from an IR heat sensor,
or, soil that is high in metal content reduces
the weight assigned to the measurement from an metal detector).

To motivate an agent to submit reports,
an agent $a$ gets an instantaneous reward, $\rho^{a,t}$,  from the market maker for the report
$\mathbf{r}^{a,t}$ it submits  at time $t$, corresponding to
its instantaneous utility, which is given by the following equation:
\begin{equation}
\rho^{a,t} = V(n^{t'=1..t}) - C^a(\mathbf{r}^{a,t}),
\label{eqn_reward}
\end{equation}
where $V(n^{t'=1..t})$ is the value for making a report
with $n^{t'=1..t}$ being the number of times the agent $a$ submitted a report
up to time $t$, and, $C^a(\mathbf{r}^{a,t})$
is the cost of making report $\mathbf{r}^{a,t}$ for agent $a$
based on the robot's expended time, battery power, etc.
We denote the agent's value for each report $V(n^{t'=1..t})$ as a
constant-valued function up to a certain threshold and a linearly
decreasing function thereafter, to de-incentivize agents from
making a large number of reports. Agent $a$'s value function is
given by the following equation:
\begin{displaymath}
   V(n^{t'=1..t}) = \left\{
     \begin{array}{ll}
       \nu & , n^{t'=1..t} \leq n^{threshold}\\
       \frac{\nu(n^{t'=1..t} - n^{max})}{(n^{threshold} - n^{max})} & , otherwise
     \end{array}
   \right.
\end{displaymath}
where $\nu \in \mathbb{Z^+}$, is a constant value that $a$ gets by submitting
reports up to a threshold,
$n^{threshold}$ is the threshold corresponding to the number of reports
$a$ can submit before its report's value starts decreasing, and, $n^{max}$ is the
maximum number of reports agent $a$ can submit before $V$ becomes negative.
Finally, to determine its strategy while submitting its report, an agent selects the
strategy that maximizes its expected utility obtained from its cumulative
reward given by Equation \ref{eqn_reward} plus an expected value
of its final reward payment if it continues making similar reports
up to the object's time window $T$.

\subsection{Decision Maker Agent}
\label{sec_dm}
The decision maker agent's task is to use the composite
belief about an object's type, $\mathbf{B}^t$, given by the prediction
market, and take actions to deploy additional robots(sensors)
 based on the value of the objective function
given in Equation \ref{eqn_objectivefunction}.
Let $AC$ denote a set of possible actions corresponding to deploying
a certain number of robots, and $D=\{d_1,...d_h\}: d_i \in Ac \subseteq AC$
denote the decision set of the decision maker. The decision function
of the decision maker is given by $dec: \Delta(\Theta) \rightarrow D$.
Let $\mathbf{u}^{dec}_j \in \mathbb{R}^m$ be the utility that the decision maker receives
by determining an object to be of type $\theta_j$ and
let $P(d_i|\theta_j)$ be the probability that the decision maker makes decision
$d_i \in D$ given object type $\theta_j$. $P(d_i|\theta_j)$ and $\mathbf{u}^{dec}_j$ are constructed
using domain knowledge \cite{Cremer,Nada01,Nada03}. Given the
aggregated belief distribution $\mathbf{B}^t$ at time $t$,
the expected utility to the decision maker for taking decision $d_i$ at time $t$ is
then $EU^{dec}(d_i,\mathbf{B}^t) = \sum_{j=1}^m  P(d_i|\theta_j) \cdot \mathbf{u}^{dec}_j \cdot \mathbf{B}^t$.
The decision that the decision maker takes at time $t$, also
called its {\em decision rule}, is the one
that maximizes its expected utility and is given by:
$d^t = \arg \max_{d_i} EU^{dec}(d_i,\mathbf{B}^t)$.

\subsection{Prediction Market}
A conventional prediction market uses the aggregated beliefs
of the market's participants or traders about the outcome of a future event,
to predict the event's outcome. The outcome of an event is
represented as a binary variable (event happens/does not happen).
The traders observe information related to the event
and report their beliefs, as probabilities
about the event's outcome. The market maker
aggregates the traders' beliefs and uses
a scoring rule to determine a payment or payoff
that will be received by each reporting trader.
In our multi-agent prediction market,
traders correspond to sensor agents, the market maker
agent automates the calculations on behalf of the conventional
market maker, and, an event in the conventional market corresponds to
identifying the type of a detected object. The time window $T$
over which an object is sensed is called the duration
of the object in the market. This time window is divided into discrete
time steps, $t=1, 2...T$. During each time step, each sensor agent observing
the object submits a report about the object's type to the market
maker agent. The market maker agent performs two functions
with these reports. First, at each time step $t$,
it aggregates the agent reports into an aggregated belief
about the object, $\mathbf{B}^t \in \Delta(\Theta)$. Secondly,
it calculates and distributes payments for the sensor agents.
It pays an immediate but nominal
reward to each agent for its report at time step $t$ using
Equation \ref{eqn_reward}. Finally, at the
end of the object's time window $T$, the market maker
also gives a larger \emph{payoff} to each agent that contributed
towards classifying the object's type.
The calculations and analysis related to these two functions
of the market maker agent are described in the following sections.

\textbf{Final Payoff Calculation.} The payoff calculation for
a sensor agent is performed by the market maker using a {\em decision scoring rule}
at the end of the object's time window.
A decision scoring rule \cite{Chen} is defined as any real
valued function that takes the agents' reported beliefs,
the realized outcome and the decisions
made by the decision maker as input, and
produces a payoff for the agent for its reported beliefs,
i.e. $S: \Delta(\Theta) \times \Theta \times D \longrightarrow \mathbb{R}$.
We design a scoring rule for decision making that is  based on how much agent $a$'s final report helped the decision
maker to make the right decisions throughout the duration of the prediction market
and by how close the agent $a$'s final report is to actual object type.
Our proposed scoring rule for decision making given that object's true
type is $\theta_j$ is given in Equation \ref{scoring_rule}:
\begin{equation}
S(r_j^{a,t},d^{[1:t]},\theta_j) = \varpi(d^{[1:t]}, \theta_j) log\left(r_j^{a,t}\right),
\label{scoring_rule}
\end{equation}
where,  $r_j^{a,t}$ is the reported belief that agent $a$ submitted at time $t$ for object type $\theta_j$,
$d^{[1:t]}$ is the set consisting of all the decisions that the decision
maker took related to the object up to the current time $t$,
$\theta_j$ is the object's true type that was revealed at the end the
object's time window, $log\left(r_j^{a,t}\right)$ measures the
goodness of the report at time $t$ relative to the true object type $\theta_j$,
and, $\varpi(d^{[1:t]},\theta_j)$ is the weight, representing how good all the decisions
the decision maker took up to time $t$ were compared to the true object type $\theta_j$.
$\varpi(d^{[1:t]},\theta_j)$ is determined by the decision maker and made
available to the agents through the market maker. We
assume that $\varpi(d^{[1:t]},\theta_j) = \sum_{i=1}^{t} P(d_i \mid \theta_j) \cdot \mathbf{u}_j^{dec}$, which gives
the expected utility of the decision maker agent for making decision $i$ when
the true type of the object is $\theta_j$.

\textbf{Aggregation.}
Since a sensor agent gets paid both through its immediate rewards
for making reports during the object's time window
and through the scoring rule function for
decision making at the end of the object's time window, we define the
total payment that the agent has received by the end
of the object's time window as a \emph{payment function}.

\begin{definition}
A function $\Psi(\mathbf{r}^{a,t}, d^{[1:t]}, \theta_j, n^{t'=1..t})$ is called a \textbf{payment
function} if each agent $a$'s total received payment at the end of the object's time window
(when $t=T$) is
\begin{equation}
\Psi(\mathbf{r}^{a,t}, d^{[1:t]}, \theta_j, n^{t'=1..t}) = \sum_{k=1}^t \rho^{a,k} + S(r_j^{a,t},d^{[1:t]},\theta_j) \\
\label{payment}
\end{equation}
where $\rho^{a,k}$, $S(r_j^{a,t},d^{[1:t]},\theta_j)$ and their
components are defined as in Equations \ref{eqn_reward} and \ref{scoring_rule}.
\end{definition}

Let $\Psi^{ave}$ denote a weighted average of the payment function
in Equation \ref{payment} over all the reporting agents,
using the report-weights assigned by the expert in Section \ref{sec_agents},
as given below:
\begin{align}
 \Psi^{ave}(\mathbf{r}^{A_{rep}^t,t}, d^{[1:t]}, \theta_j, n^{A_{rep}^t,t}) &= \sum_{k=1}^t \sum_{a \in A^t_{rep}}w^{a,k} \rho^{a,k}\\
 &+ \varpi(d^{[1:t]}, \theta_j)\sum_{a \in A^t_{rep}}  w^{a,t} log\left(r_j^{a,t}\right)\nonumber,
\label{eqn_ave_payment}
\end{align}
where $A_{rep}^t$ is the subset of agents that are able to perceive object feature at time $t$ and
$w^{a,k}$ is the weight assigned to agent $a$ at time $k$ by the expert.
To calculate an aggregated belief value in a prediction market,
Hanson \cite{Hanson07} used the generalized inverse function of the scoring rule.
Likewise, we calculate the aggregated
belief for our market maker agent
by taking the generalized inverse of the average payment function
given in Equation \ref{eqn_ave_payment}:
\begin{align}
B^t_j  &= Agg_{a \in A_{rep}^t} (\mathbf{b}^{a,t})\\
&= \frac{\frac{exp\left(\Psi^{ave}- \sum_{k=1}^t \sum_{a \in A^t_{rep}}w^{a,k}\rho^{a,k}\right)}{\varpi(d^{[1:t]},\theta_j)}}{\frac{\sum_{\theta_j=\theta_1}^{\theta_m} exp\left(\Psi^{ave} - \sum_{k=1}^t \sum_{a \in A^t_{rep}}w^{a,k}\rho^{a,k}\right)}{\varpi(d^{[1:t]},\theta_j)}}
\label{market_distr}
\end{align}
where $B^t_j \in \mathbf{B}^t$ is the $j$-th  component of the
aggregated belief for object type $\theta_j$.
The aggregated belief vector, $\mathbf{B}^t$, calculated by the market maker agent
is sent to the decision maker agent so that it can calculate
its expected utility given in Section \ref{sec_dm},
as well as, sent back to each sensor agent that reported the object's type
till time step $t$, so that the agent can refine its future reports,
if any, using this aggregate of the reports from other agents.

\section{Payment function: Properties \\and Characteristics}
In this section we first show that the payment function is proper, or incentive compatible.
Then we show that when the market maker uses this payment function to reward each agent for
its reported beliefs,
 reporting beliefs truthfully is the optimal strategy for each agent.

We can characterize a proper payment function similar to a proper scoring rule.
\begin{definition}
A payment function $\Psi$ is proper, or incentive compatible, if
\begin{equation}
\Psi(\mathbf{b^{a,t}}, d^{[1:t]}, \theta_j, n^{t'=1..t}) \geq \Psi(\mathbf{r^{a,t}}, d^{[1:t]}, \theta_j, n^{t'=1..t}),
\label{proper}
\end{equation}
$\forall \mathbf{b^{a,t}}, \mathbf{r^{a,t}} \in \Delta(O).$
\end{definition}
$\Psi$ is strictly proper if Equation \ref{proper} holds with equality, i.e.,
iff $\mathbf{b^{a,t}} = \mathbf{r^{a,t}}$.

Payment functions can be shown to be proper by representing
them using convex functions \cite{Chen,Gneiting}. To show
that our payment function in Equation \ref{payment} is
proper, we characterize it in terms of a convex function,
as shown below:

\begin{theorem}
A payment function $\Psi$ is proper for decision making if
\begin{equation}
\Psi(\mathbf{r^{a,t}}, d^{[1:t]}, \theta_j, n^{t'=1..t}) = G(\mathbf{r^{a,t}}) - G'(\mathbf{r^{a,t}}) \cdot (\mathbf{r^{a,t}}) +\frac{G'_{i,j}(\mathbf{r^{a,t}})}{P(d_i|\theta_j)},
\label{convex_proper}
\end{equation}
 where $G(\mathbf{r^{a,t}})$ is a convex function and $G'(\mathbf{r^{a,t}})$ is
 a subgradient of $G$ at point $\mathbf{r^{a,t}}$ and $P(d_i|\theta_j) > 0$.
\label{G_form}
\end{theorem}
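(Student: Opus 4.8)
The plan is to establish this as the decision-market analogue of the Savage and Gneiting--Raftery characterization of proper scoring rules: any rule expressible through a convex potential and its subgradient rewards truthful reporting. Throughout, I would read properness (Equation \ref{proper}) in the expected sense, since a pointwise comparison at a fixed realized $\theta_j$ cannot hold for a logarithmic score; the realized type $\theta_j$ is drawn according to the agent's true belief $\mathbf{b}^{a,t}$, and the claim is that reporting $\mathbf{r}^{a,t}=\mathbf{b}^{a,t}$ maximizes the agent's expected payment. So the first step is to fix an arbitrary true belief $\mathbf{b}^{a,t}$ and a candidate report $\mathbf{r}^{a,t}$ and write out $\mathbb{E}[\Psi(\mathbf{r}^{a,t},\dots)]$, where the expectation is over the realized outcome $\theta_j$ (with probability $b_j^{a,t}$) and the decision $d_i$ taken by the decision maker (with conditional probability $P(d_i\mid\theta_j)$).

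Next I would carry out the expectation term by term on the form in Equation \ref{convex_proper}. The leading terms $G(\mathbf{r}^{a,t})-G'(\mathbf{r}^{a,t})\cdot\mathbf{r}^{a,t}$ do not depend on the realized outcome or decision, so they survive the averaging unchanged because $\sum_i P(d_i\mid\theta_j)=1$ and $\sum_j b_j^{a,t}=1$. The decisive step is the correction term $G'_{i,j}(\mathbf{r}^{a,t})/P(d_i\mid\theta_j)$: when it is weighted by the decision probability $P(d_i\mid\theta_j)$ inside the expectation, the denominator cancels, so that summing over decisions leaves exactly the outcome-$j$ subgradient component and summing over outcomes against $b_j^{a,t}$ yields $G'(\mathbf{r}^{a,t})\cdot\mathbf{b}^{a,t}$. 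Collecting terms gives the clean expression
\[
\mathbb{E}\left[\Psi(\mathbf{r}^{a,t},\dots)\right] = G(\mathbf{r}^{a,t}) + G'(\mathbf{r}^{a,t})\cdot\left(\mathbf{b}^{a,t}-\mathbf{r}^{a,t}\right).
\]

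Finally I would invoke convexity. Since $G'(\mathbf{r}^{a,t})$ is a subgradient of the convex function $G$ at $\mathbf{r}^{a,t}$, the subgradient inequality gives $G(\mathbf{b}^{a,t}) \geq G(\mathbf{r}^{a,t}) + G'(\mathbf{r}^{a,t})\cdot(\mathbf{b}^{a,t}-\mathbf{r}^{a,t})$, i.e. the expected payment under any report $\mathbf{r}^{a,t}$ is at most $G(\mathbf{b}^{a,t})$. Setting $\mathbf{r}^{a,t}=\mathbf{b}^{a,t}$ makes the subgradient term vanish and attains this bound, so truthful reporting is optimal and $\Psi$ is proper; if $G$ is strictly convex the inequality is strict unless $\mathbf{r}^{a,t}=\mathbf{b}^{a,t}$, yielding strict properness.

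The step I expect to be the main obstacle is the cancellation in the correction term, since this is exactly where the decision-market structure departs from an ordinary scoring rule. Dividing by $P(d_i\mid\theta_j)>0$ (which the hypothesis guarantees is well defined) is what neutralizes the selection effect introduced by the decision maker's action depending on the reports, so that the decision-dependence washes out in expectation and the rule collapses to the standard proper form. Getting the index bookkeeping of $G'_{i,j}$ right --- verifying that summing these components over decisions reproduces the plain subgradient component $G'_j$, so that $G'(\mathbf{r}^{a,t})\cdot\mathbf{b}^{a,t}$ emerges --- is the fiddly part that must be checked carefully, and it is also worth flagging the gap between the literal pointwise statement of Equation \ref{proper} and the expected-payment sense in which the result actually holds.
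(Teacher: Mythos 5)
Your proposal is correct and follows essentially the same route as the paper's own proof: compute the expected payment $\sum_{i}\sum_{j} P(d_i|\theta_j)\, b_j\, \Psi(\cdot)$, observe that the $P(d_i|\theta_j)$ factors cancel against the correction term so the expectation collapses to $G(\mathbf{r}) + G'(\mathbf{r})\cdot(\mathbf{b}-\mathbf{r})$, and then apply the subgradient inequality to conclude truthful reporting attains the maximum $G(\mathbf{b})$, with strictness under strict convexity. Your version is in fact slightly cleaner, since you write the intermediate expression with the correct sign (the paper's displayed line $G(\mathbf{r}) - G'(\mathbf{r})(\mathbf{b}-\mathbf{r})$ has a sign typo) and you explicitly flag that properness must be read in the expected-payment sense, which the paper's proof uses implicitly.
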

\begin{proof}
Consider a payment function $\Psi$ satisfying Equation \ref{convex_proper}.
We will show that $\Psi$ must be proper for decision making. We will drop the agent
and time subscripts in this proof, and also we will write $\Psi(\mathbf{r},d^{[1:t]})$ (or its element $\Psi(r_j, d_i|\theta_j)$)
instead of full $\Psi(\mathbf{r^{a,t}}, d^{[1:t]}, \theta_j, n^{t'=1..t})$.
\begin{align*}
EU(\mathbf{b}, \mathbf{b}) &= \sum_{i=1}^h \sum_{j=1}^m P(d_i|\theta_j) b_j \Psi(b_j,d_i|\theta_j) \\
&= \sum_{i=1}^h \sum_{j=1}^m P(d_i) b_j \left( G(\mathbf{b}) - G'(\mathbf{b}) \cdot \mathbf{b} + \frac{G'_{i,j}(\mathbf{b})}{P(d_i|\theta_j)}\right)\\
&= G(\mathbf{b}) - G'(\mathbf{b})\cdot \mathbf{b} + \sum_{i=1}^h \sum_{j=1}^m G'_{i,j}(\mathbf{b}) b_j\\
&= G(\mathbf{b}) - G'(\mathbf{b})\cdot \mathbf{b} + G'(\mathbf{b})\cdot \mathbf{b} = G(\mathbf{b}).
\end{align*}
 Since $G$ is convex and $G'$ is its subgradient, we have
\begin{align*}
EU(\mathbf{b}, \mathbf{r}) &= \sum_{i=1}^h \sum_{j=1}^m P(d_i|\theta_j) b_j \Psi(r_j,d_i|\theta_j) \\
&= \sum_{i=1}^h \sum_{j=1}^m P(d_i|\theta_j) b_j \left( G(\mathbf{r}) - G'(\mathbf{r}) \cdot \mathbf{r} + \frac{G'_{i,j}(\mathbf{r})}{P(d_i|\theta_j)}\right)\\
&= G(\mathbf{r}) - G'(\mathbf{r}) (\mathbf{b} - \mathbf{r}) \\
&\leq G(\mathbf{b}) = EU(\mathbf{b}, \mathbf{b}).
\end{align*}
Thus, $\Psi$ is a proper payment function for decision making. $\Psi$ is strictly
proper payment function and the inequality is strict if $G$ is a strictly convex function.
\end{proof}

\begin{proposition}
The payment function given in Equation \ref{payment} is proper.
\end{proposition}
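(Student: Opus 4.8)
The plan is to invoke Theorem \ref{G_form}: since that theorem already gives a sufficient condition for properness in terms of a convex generating function, it suffices to exhibit a convex $G$ for which the payment function of Equation \ref{payment} takes the form of Equation \ref{convex_proper}. First I would split $\Psi$ into its two constituent pieces, the cumulative instantaneous reward $\sum_{k=1}^t \rho^{a,k}$ (Equation \ref{eqn_reward}) and the decision scoring term $\varpi(d^{[1:t]},\theta_j)\log(r_j^{a,t})$, and argue that only the latter is relevant to properness. The reward terms depend on the report $\mathbf{r}^{a,t}$ only through the report count (via $V$) and the sensing cost $C^a$, which measures expended time and battery power rather than the reported distribution itself; hence for a fixed sensing effort they are identical whether the agent declares its true belief $\mathbf{b}^{a,t}$ or a manipulated $\mathbf{r}^{a,t}$, so they play the role of the report-independent additive terms in Equation \ref{convex_proper}.

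The heart of the argument is to match the scoring part to the convex form. Here I would use the classical fact that the logarithmic scoring rule $\log(r_j)$ is generated by the negative Shannon entropy: taking $G(\mathbf{r}) = \sum_{k} r_k \log r_k$, weighted by the positive factors $\mathbf{u}_k^{dec}$ and the decision probabilities so as to reproduce $\varpi$, its subgradient has components $G'_k(\mathbf{r}) = \log r_k + 1$, and substituting $G$ and $G'$ into Equation \ref{convex_proper} causes the $\sum_k r_k\log r_k$ terms and the additive $+1$ terms to cancel, leaving exactly $\varpi(d^{[1:t]},\theta_j)\log(r_j^{a,t})$. Since $\varpi > 0$ (the decision utilities and probabilities are positive) and positive scaling preserves convexity, $G$ is convex, and in fact strictly convex on the interior of the simplex, so Theorem \ref{G_form} yields that $\Psi$ is proper, and strictly proper with equality only at $\mathbf{r}^{a,t} = \mathbf{b}^{a,t}$.

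The main obstacle I anticipate is bookkeeping the decision structure correctly. The form in Equation \ref{convex_proper} decomposes the subgradient as $G'_{i,j}/P(d_i|\theta_j)$ over the decisions $d_i$, so I must choose the per-decision decomposition $G'_{i,j}$ so that it both reconstructs the weight $\varpi(d^{[1:t]},\theta_j) = \sum_{i=1}^t P(d_i|\theta_j)\mathbf{u}_j^{dec}$ when summed over the decisions taken up to time $t$, and satisfies $\sum_i G'_{i,j} = G'_j$, which is precisely the identity the proof of Theorem \ref{G_form} relies on when collapsing $\sum_{i}\sum_{j} G'_{i,j} b_j$ back to $G'(\mathbf{b})\cdot\mathbf{b}$. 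The second, more conceptual, subtlety is justifying that the cost term $C^a(\mathbf{r}^{a,t})$ genuinely does not depend on the truthfulness of the report; if it did, it could shift the argmax over reports and break incentive compatibility, so I would make explicit that $C^a$ is a function of the resources spent acquiring a measurement and is invariant to how the resulting belief is declared.
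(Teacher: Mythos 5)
Your overall strategy---exhibit an explicit convex generator and invoke Theorem \ref{G_form}---is the classical Gneiting--Raftery route, and it is genuinely different from the paper's own proof, which instead makes the self-referential choices $G(\mathbf{b}) = EU(\mathbf{b},\mathbf{b})$ and $G'_{i,j}(\mathbf{b}) = P(d_i|\theta_j)\,\Psi(\mathbf{b},d^{[1:t]},\theta_j,n^{t'=1..t})$ and then checks that Equation \ref{convex_proper} collapses to the identity $\Psi = \Psi$, never verifying that this $G$ is convex or that this $G'$ is actually its subgradient. Your handling of the instantaneous-reward part is fine: $\sum_{k=1}^t \rho^{a,k}$ from Equation \ref{eqn_reward} depends on the report count and the sensing cost rather than on the declared distribution, so it enters as an additive constant in the comparison between reporting $\mathbf{b}^{a,t}$ and reporting $\mathbf{r}^{a,t}$.

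The gap is in the step you set aside as bookkeeping. The weight $\varpi(d^{[1:t]},\theta_j) = \sum_{i=1}^{t} P(d_i\mid\theta_j)\,\mathbf{u}_j^{dec}$ depends on the realized type $\theta_j$, so the scoring term is an \emph{outcome-weighted} log score $S(\mathbf{r},\theta_j) = \varpi_j \log r_j$ with $\varpi_j$ varying over $j$. Since the generator must be a single function of $\mathbf{r}$ alone, your construction forces $G(\mathbf{r}) = \sum_{k} \varpi_k r_k \log r_k$, whose subgradient is $G'_k(\mathbf{r}) = \varpi_k(\log r_k + 1)$, and then
\begin{equation*}
G(\mathbf{r}) - G'(\mathbf{r})\cdot\mathbf{r} + G'_j(\mathbf{r}) = \varpi_j \log r_j + \varpi_j - \sum_{k=1}^{m} \varpi_k r_k ,
\end{equation*}
so the cancellation you rely on leaves a residual $\varpi_j - \sum_k \varpi_k r_k$ that vanishes only when all the $\varpi_k$ are equal (in which case $\sum_k \varpi_k r_k = \varpi$ on the simplex). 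This is not repairable by a cleverer decomposition of $G'_{i,j}$ over decisions: an outcome-weighted log score is in fact not proper, because maximizing the expected score $\sum_j b_j \varpi_j \log r_j$ subject to $\sum_j r_j = 1$ gives $r_j^* = \varpi_j b_j / \sum_k \varpi_k b_k$, which equals $b_j$ only when the $\varpi_j$ coincide; hence no convex generator reproducing $\varpi_j \log r_j$ can exist. Your argument goes through only under the additional hypothesis that $\varpi(d^{[1:t]},\cdot)$ is constant across object types. Ironically, your more principled route exposes precisely the obstruction that the paper's tautological choice of $G$ and $G'_{i,j}$ glosses over.
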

\begin{proof}
Let $G(\mathbf{b}) = EU(\mathbf{b}, \mathbf{b})$ and  \\
$G'_{i,j}(\mathbf{b}) = P(d_i|\theta_j) \Psi(\mathbf{b},d^{[1:t]},\theta_j, n^{t'=1..t})$.
Then we can write the payment function as
\begin{align*}
\Psi(\mathbf{b},d^{[1:t]},\theta_j,n^{t'=1..t}) &= \sum_{i=1}^h \sum_{j=1}^m P(d_i|\theta_j) b_j \Psi(\mathbf{b},d^{[1:t]},\theta_j,n^{t'=1..t}) \\
&- \mathbf{b} \cdot \sum_{i=1}^h \sum_{j=1}^m P(d_i|\theta_j) \Psi(\mathbf{b},d^{[1:t]},\theta_j,n^{t'=1..t}) \\
&+\frac{\Psi(\mathbf{b},d^{[1:t]},\theta_j,n^{t'=1..t})  \cdot P(d_i|\theta_j)}{P(d_i|\theta_j)}    \\
&= \Psi(\mathbf{b},d^{[1:t]},\theta_j,n^{t'=1..t}).
\end{align*}
We can clearly see that the payment function can be written in the form given in Equation \ref{convex_proper}
from Theorem \ref{G_form}. Therefore, the payment function $\Psi$ given in Equation
\ref{payment} is a proper payment function.
\end{proof}

 {\bf Agent Reporting Strategy.} Assume that agent $a$'s report at time $t$ is its final report, then its utility function can be written as
$u_j^{a,t} = \sum_{k=1}^t \rho^{a,k} + S(r_j^{a,t},d^{[1:t]},\theta_j)$.
Then, agent $a$'s expected utility for object type $\theta_j$ given its reported belief for object type $\theta_j$, $r_j^{a,t}$,
and its true belief about object type $\theta_j$, $b_j^{a,t}$ at time $t$ is
\begin{align}
EU_j^{a}(r_j^{a,t},b_j^{a,t}) &= \sum_{i=1}^h P(d_i|\theta_j) b_j^{a,t} u_j^{a,t} \\
&= \sum_{i=1}^h P(d_i|\theta_j) b_j^{a,t} \left(\sum_{k=1}^t \rho^{a,k} + S(r_j^{a,t},d^{[1:t]},\theta_j)\right) \nonumber,
\label{eqn_agentutility}
\end{align}
where $P(d_i|\theta_j)$ is the probability that the decision maker takes decision
$d_i$ when the object's type is $\theta_j$.

\begin{proposition}
If agent $a$ is paid according to $\Psi$, then it reports its beliefs about the object types truthfully.
\end{proposition}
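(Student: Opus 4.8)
The plan is to combine the properness of $\Psi$ established in the preceding Proposition (through Theorem \ref{G_form}) with the assumption, stated in the \textbf{Agent Rewards} paragraph, that each agent selects the report maximizing its expected utility. First I would recall that, treating the report at time $t$ as final, agent $a$'s per-type expected utility is $EU_j^a(r_j^{a,t}, b_j^{a,t}) = \sum_{i=1}^h P(d_i\mid\theta_j)\, b_j^{a,t}\, u_j^{a,t}$, where $u_j^{a,t} = \sum_{k=1}^t \rho^{a,k} + S(r_j^{a,t}, d^{[1:t]}, \theta_j)$ is exactly the payment $\Psi(\mathbf{r}^{a,t}, d^{[1:t]}, \theta_j, n^{t'=1..t})$ of Equation \ref{payment}. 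Summing over all object types, the agent's total expected utility from reporting $\mathbf{r}$ while holding true belief $\mathbf{b}$ is $\sum_{j=1}^m \sum_{i=1}^h P(d_i\mid\theta_j)\, b_j\, \Psi(r_j, d_i\mid\theta_j)$, which is precisely the quantity $EU(\mathbf{b}, \mathbf{r})$ that appears in the proof of Theorem \ref{G_form}. Establishing this identification is the conceptual core of the argument.

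Second, I would invoke properness directly. Since the preceding Proposition writes the payment function of Equation \ref{payment} in the convex form of Equation \ref{convex_proper}, Theorem \ref{G_form} yields $EU(\mathbf{b}, \mathbf{r}) = G(\mathbf{r}) - G'(\mathbf{r})\cdot(\mathbf{b}-\mathbf{r}) \leq G(\mathbf{b}) = EU(\mathbf{b}, \mathbf{b})$, with the inequality strict whenever $\mathbf{r} \neq \mathbf{b}$ provided $G$ is strictly convex. Read through the identification of the first paragraph, this says that agent $a$'s expected utility $EU^a(\mathbf{r}, \mathbf{b})$ is maximized uniquely at $\mathbf{r} = \mathbf{b}$, so the expected-utility-maximizing report is $\mathbf{r}^{a,t} = \mathbf{b}^{a,t}$, i.e., the agent reports its beliefs truthfully, which is the claim.

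The main obstacle I anticipate is not the convexity inequality (that work is already discharged in Theorem \ref{G_form}) but the bookkeeping required to legitimately restrict the optimization to the scoring-rule term. The cumulative rewards $\sum_{k=1}^t \rho^{a,k}$ carry the costs $C^a(\mathbf{r}^{a,k})$, so I would argue that the rewards for $k<t$ are sunk and fixed at the time of the current decision, while the value $V(n^{t'=1..t})$ depends only on the report count and not on the reported values; hence the only report-sensitive quantity steering the $\arg\max$ is $S(r_j^{a,t}, d^{[1:t]}, \theta_j) = \varpi(d^{[1:t]},\theta_j)\log(r_j^{a,t})$, which is exactly where the properness argument bites. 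A secondary point I would flag is the strictness clause: truthful reporting is the \emph{unique} optimum only under strict convexity of $G$, equivalently strict properness of $\Psi$, and the logarithmic form of the scoring term is what supplies this strictness.
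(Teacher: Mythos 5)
Your proposal is correct relative to the paper's earlier results, but it takes a genuinely different route from the paper's own proof. The paper does not invoke properness at all: it writes out the agent's expected utility with the explicit logarithmic score $\varpi(d^{[1:t]},\theta_j)\log(r_j^{a,t})$, imposes the simplex constraint $\sum_{j=1}^m r_j^{a,t}=1$, forms the Lagrangian, and solves the first-order conditions to exhibit the maximizer $r_j^{a,*}=b_j^{a,t}$ directly. You instead identify the agent's total expected utility $\sum_j EU_j^a$ with the functional $EU(\mathbf{b},\mathbf{r})$ appearing in the proof of Theorem \ref{G_form}, argue that the report-independent terms (past rewards are sunk, $V$ depends only on the report count) can be set aside, and then quote the subgradient inequality $EU(\mathbf{b},\mathbf{r})\le G(\mathbf{b})=EU(\mathbf{b},\mathbf{b})$. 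Your route is more modular and more general: it would apply verbatim to any payment function in the convex form of Equation \ref{convex_proper}, not just the log score, and it makes explicit a bookkeeping step the paper performs silently (the paper simply differentiates the $\sum_{k}\rho^{a,k}$ terms to zero). What the paper's computational route buys is independence from the preceding proposition, whose convex-form verification is essentially a formal rewriting ($G$ is \emph{defined} as $EU(\mathbf{b},\mathbf{b})$, with neither its convexity nor the subgradient property checked); your argument inherits whatever force that proposition has, and, as you correctly flag, uniqueness of the truthful optimum requires strict properness, which in the paper's route comes from strict concavity of the logarithm. One caveat that affects both arguments: the explicit first-order computation actually yields $r_j^{a,*}\propto b_j^{a,t}\,\varpi(d^{[1:t]},\theta_j)\sum_{i=1}^h P(d_i|\theta_j)$, which renormalizes to exactly $b_j^{a,t}$ only when the factor $\varpi(d^{[1:t]},\theta_j)\sum_{i=1}^h P(d_i|\theta_j)$ does not vary with $j$; the paper's final substitution glosses over this, and your abstract route hides the same issue inside the unverified convexity claim, so neither proof is strictly more rigorous than the other on this point.
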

\begin{proof}
Sensor agent $a$ wants to maximize its expected utility function and solves the following program
\begin{equation*}
\arg \max_{\mathbf{r}} \left(\sum_{i=1}^h \sum_{j=1}^m P(d_i|\theta_j) b_j^{a,t} \left[\sum_{k=1}^t \rho^{a,k} + \varpi(d^{[1:t]},\theta_j) log\left(r_j^{a,t}\right)\right]\right),
\end{equation*}
s.t. $\sum_{j=1}^m r_j^{a,t} = 1$.\\
The Lagrangian is
\begin{align*}
L(\mathbf{r},\lambda) &= \left(\sum_{i=1}^h \sum_{j=1}^m P(d_i|\theta_j) b_j^{a,t} \left[\sum_{k=1}^t \rho^{a,k} + \varpi(d^{[1:t]},\theta_j) log\left(r_j^{a,t}\right)\right]\right) \\
&- \lambda \left(\sum_{j=1}^m r_j^{a,t} - 1\right).
\end{align*}
The first order conditions are
\begin{align*}
&\frac{\partial L}{\partial r_j} = \sum_{i=1}^h \sum_{j=1}^m P(d_i|\theta_j) b_j^{a,t} \frac{\varpi(d^{[1:t]},\theta_j)}{r_j^{a,*}} - \lambda = 0\\
&\Rightarrow r_j^{a,*} = \frac{\varpi(d^{[1:t]},\theta_j) b_j^{a,t} \sum_{i=1}^h P(d_i|\theta_j)}{\lambda}\\
&\frac{\partial L}{\partial \lambda} = -\sum_{j=1}^m r_j^{a,t} + 1 = 0.\\
\end{align*}

Substituting $r_j^{a,*}$ into the second equation above, we have
\begin{align*}
&\frac{\varpi(d^{[1:t]},\theta_j) b_j^{a,t} \sum_{i=1}^h P(d_i|\theta_j)}{\lambda} = 1\\
&\lambda = \varpi(d^{[1:t]},\theta_j) \sum_{i=1}^h P(d_i|\theta_j)\\
&r_j^{a,*} = b_j^{a,t}.
\end{align*}
\end{proof}

\section{Experimental Results}
\begin{table}[h!]
\begin{center}
\begin{tabular}{|l|l|}
\hline
\textbf{Name} & \textbf{Value}\\
\hline
Object types & mine, metallic object(non-mine), \\
& non-metallic object(non-mine)\\
\hline
Features & metallic content, object's area, \\
& object's depth, sensor's position\\
\hline
Sensor types & MD, IR, and GPR\\
\hline
Max no. of sensors & $10$ ($5$MD,$3$IR,$2$GPR)\\
\hline
Max no. of decisions & $14$\\
\hline
$T$ (object identification window)& $10$\\
\hline
$\nu$ (agent's value if $n^{t'=1..t}\leq n^{threshold}$) & $5$\\
\hline
$n_{max}$ (max no. of reports before value & $20$\\
 is negative)& \\
\hline
$n_{threshold}$ (no. of reports before agent's value  & $5$\\
is less than $\nu$)  & \\
\hline
\end{tabular}
\caption{Parameters used for our simulation experiments.}
\label{table_sim_parms}
\end{center}
\end{table}
We have conducted several experiments using our aggregation
technique for decision-making
within a multi-sensor landmine detection scenario described
in Section \ref{sec_intro}.
Our environment contains different buried objects, some of which
are landmines. The true types of the objects are randomly determined at
the beginning of the simulation.
Due to the scarcity of real data related to landmine detection,
we have used the domain knowledge
that was reported in \cite{Cremer, Nada03, Nada01} to determine
object types, object features,
sensor agents' reporting costs, decision maker agent's decision set,
decision maker agent's utility of
determining objects of different types,
and, to construct the probability distributions for
$P(\theta_j|g)$ and $P(d_i|\theta_j)$.
We report simulation results for root mean squared error (RMSE) defined in
Section\ref{sec_problem} and also for number of sensors over time,
cost over object types, and average utility of the sensors over time.

{\bf Compared Techniques.}
For comparing the performance of our prediction
market based object classification techniques,
we have used two other well-known techniques for
information fusion: (a) Dempster-Shafer (D-S) theory
for landmine classification \cite{Nada03}, where
a two-level approach based on belief functions is used.
At the first level, the detected object is
classified according to its metal content. At
the second level the chosen level of metal content
is further analyzed to classify
the object as a landmine or a friendly object.
The belief update of the sensors that we used for D-S method
is the same one we have described in Section \ref{sec_agents}.
(b) Distributed Data Fusion (DDF) \cite{Manyika95},
where sensor measurements are refined over
successive observations using a temporal, Bayesian
inference-based, information filter.
To compare DDF with our prediction market-based
technique, we replaced our belief aggregation mechanism
given in Equation \ref{market_distr} with
a DDF-based information filter.
We compare our techniques using some
standard evaluation metrics from multi-sensor
information fusion \cite{Osborne08}:
root mean squared error (RMSE)
defined as in Section \ref{sec_problem}, normed
mean squared errors (NMSE) calculated as:\\
$NMSE^t(\hat{\Theta}^t-vec(\theta_j)) = 10\, log_{10} \frac{\frac{1}{m}\sum_{j=1}^m \left(\hat{\Theta}_j^t,vec(\theta_j)\right)^2}{\frac{1}{m} \left(\sum_{j=1}^m vec(\theta_j)^2\right) - \left(\frac{1}{m} \sum_{j=1}^m vec(\theta_j)\right)^2}$, and,
the information gain, also known as Kullback-Leibler divergence
and relative entropy, calculated as:\\
$D^t_{KL}(\hat{\Theta}^t||vec(\theta_j) = \sum_{j=1}^m \hat{\Theta}_j^t log\left(\frac{\hat{\Theta}_j^t}{vec(\theta_j)}\right)$.\\
$\hat{\Theta}^t$ was calculated using D-S,
DDF, and our prediction market technique ($\hat{\Theta}^t = \mathbf{B}^t$).
\begin{table}[h]
\begin{center}
\begin{tabular}{|l|l|l|l|l|}
\hline
\textbf{Object type} & \textbf{Time}& \textbf{PM} & \textbf{DDF} & \textbf{D-S}\\
 & \textbf{steps}&  &  &  \\
\hline
\cline{1-5}
\textbf{Mine} & 1 & 1($1$MD) & 1($1$MD)& 1($1$MD)\\
\hline
 & 2 & 3($1$MD,$1$IR) & 3($1$MD,$1$GPR) & 3($1$IR,$1$GPR)\\
\hline
 & 3 & 4($1$GPR) & 5($1$MD,$1$IR) & 4($1$MD)\\
\hline
 & 4 & 5($1$MD) & 6($1$IR) & 5($1$MD)\\
\hline
 & 5 & 6($1$MD) & 7($1$$1$MD) & 6($1$IR)\\
\hline
 & 6 & 7($1$IR) & 8($1$MD) & 7($1$IR)\\
\hline
 & 7 & - & 9($1$IR) & 8($1$MD)\\
\hline
\cline{1-5}
\textbf{Metallic} or & 1 & 1($1$MD) & 1($1$MD)& 1($1$MD)\\
\hline
\textbf{Friendly} for D-S & 2 & 3($1$MD,$1$IR) & 4($1$MD,$1$IR,$1$GPR) & 3($1$IR,$1$GPR)\\
\hline
 & 3 & 4($1$GPR) & 5($1$MD) & 4($1$MD)\\
\hline
 & 4 & 5($1$MD) & 6($1$IR) & 5($1$IR)\\
\hline
 & 5 & 6($1$IR) & 7($1$MD) & 6($1$$1$IR)\\
\hline
 & 6 & 7($1$MD) & 8($1$IR) & 7($1$MD)\\
\hline
 & 7 & 8($1$IR) & 9($1$GPR) & 8($1$MD)\\
 \hline
 & 8 & - & 9($1$MD) & 8($1$MD)\\
\hline
\cline{1-4}
\textbf{Non-metallic} & 1 & 1($1$MD) & 1($1$MD) \\
\cline{1-4}
& 2 & 2($1$MD) & 2($1$IR)\\
\cline{1-4}
& 3 & 3($1$IR) & 3($1$MD) \\
\cline{1-4}
 & 4 & 4($1$MD) & 4($1$GPR) \\
\cline{1-4}
 & 5 & 5($1$IR) & 5($1$MD) \\
\cline{1-4}
 & 6 & 6($1$MD) & 6($1$IR) \\
\cline{1-4}
 & 7 & - & 7($1$MD) \\
\cline{1-4}
\end{tabular}
\caption{Different number of sensors and the sensor types deployed over time by a decision maker to classify different types of objects.}
\label{table_num_sensors}
\end{center}
\end{table}
Since the focus of our work is on the quality of information fusion,
we will concentrate on describing
the results for one object. We assume that there are three types of sensors,
MD (least operation cost, most noisy), IR (intermediate operation cost, moderately noisy),
and GPR (expensive operation cost, most accurate).
We also assume that there are a total of $5$ MD sensors, $3$ IR sensors, and $2$ GPR sensors
available to the decision maker for classifying this object.
Initially, the object is detected using one MD sensor. Once the object is
detected, the time window in the prediction market for identifying the object's type
starts. The MD sensor sends its report to the market
maker in the prediction market and the decision maker makes its first decision
based on this one report. We assume that decision maker's decision
(sent to the robot/sensor scheduling algorithm in Figure \ref{fig_diagram})
is how many ($0-3$) and what type (MD,IR,GPR) of sensors to send to the site
of the detected object subsequently. We have considered a set of $14$
out of all the possible decisions under this setting.
From \cite{Nada01}, we derive four object features, which are metallic content,
area of the object, depth of the object, and the position of the sensor.
Combinations of the values of these four features constitute the signal set $G$
and at each time step, a sensor perceiving the object receives a signal $g \in G$.
The value of the signal also varies based on the robot/sensor's current position
relative to the object.
We assume that the identification of an object stops
and the object type is revealed when either $\mathbf{B}^t_j \geq 0.95$, for any $j$,
or after $10$ time steps.
The default values for all domain related parameters are shown in Table \ref{table_sim_parms}. All of our results were
averaged over $10$ runs and the error bars indicate the standard deviation over the number of runs.
\begin{figure*}[ht!]
\begin{center}
\begin{tabular}{llll}
\hspace{-0.15in}
    \includegraphics[width=1.5in]{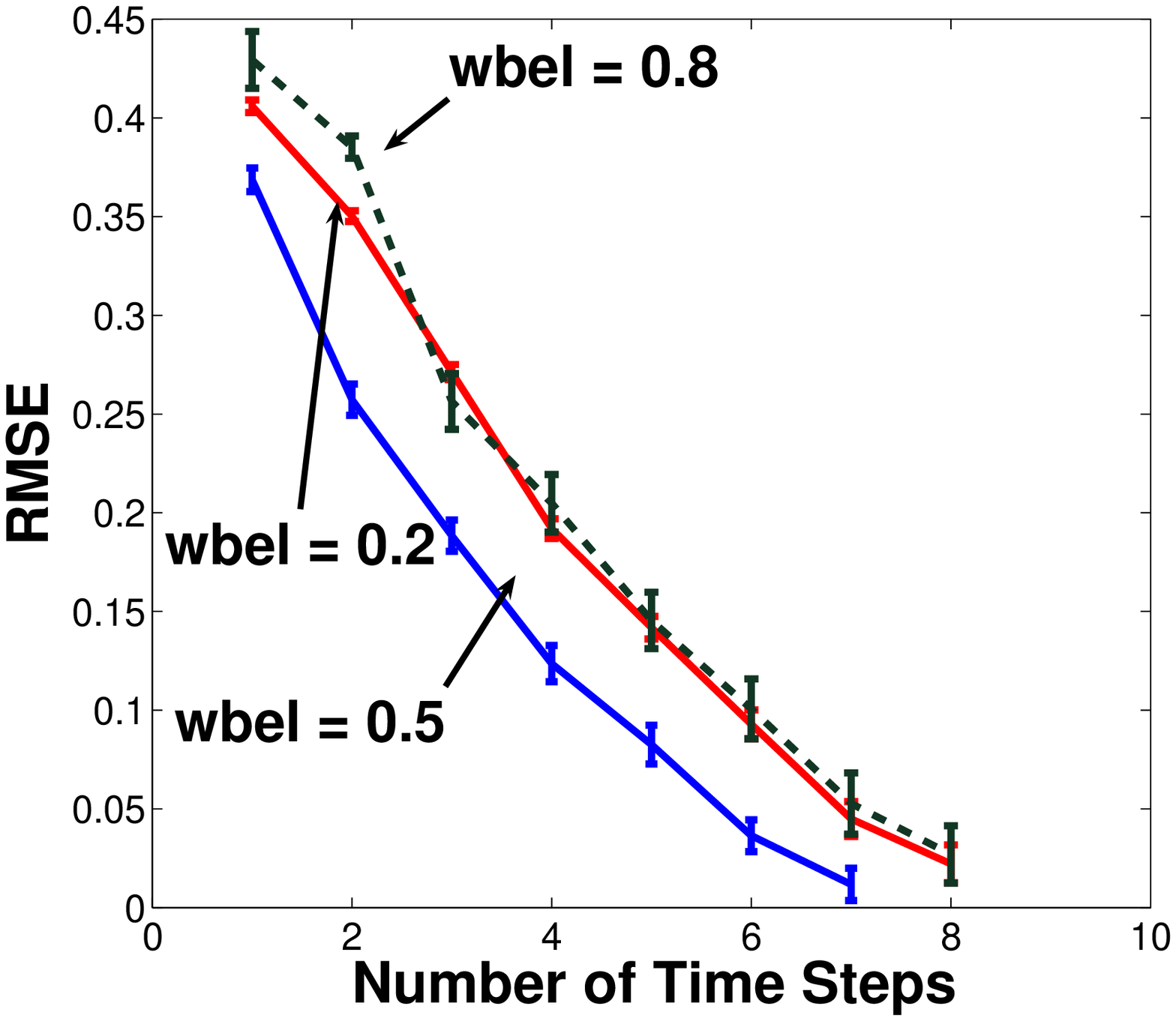}
    \hspace{-0.19in}
    &
    \hspace{-0.19in}
    \includegraphics[width=1.5in]{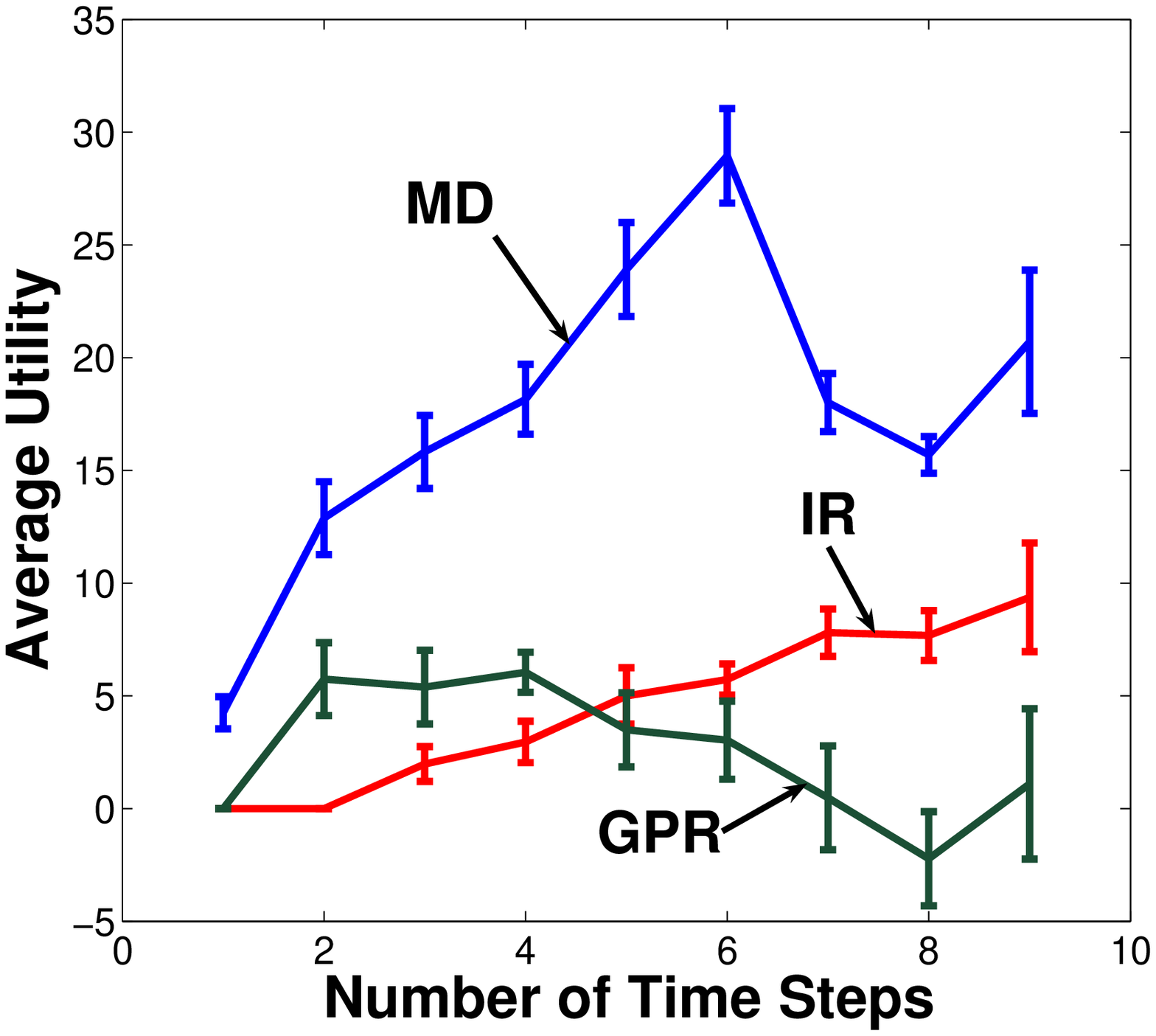}
    \hspace{-0.19in}
    &
    \hspace{-0.19in}
    \includegraphics[width=1.5in]{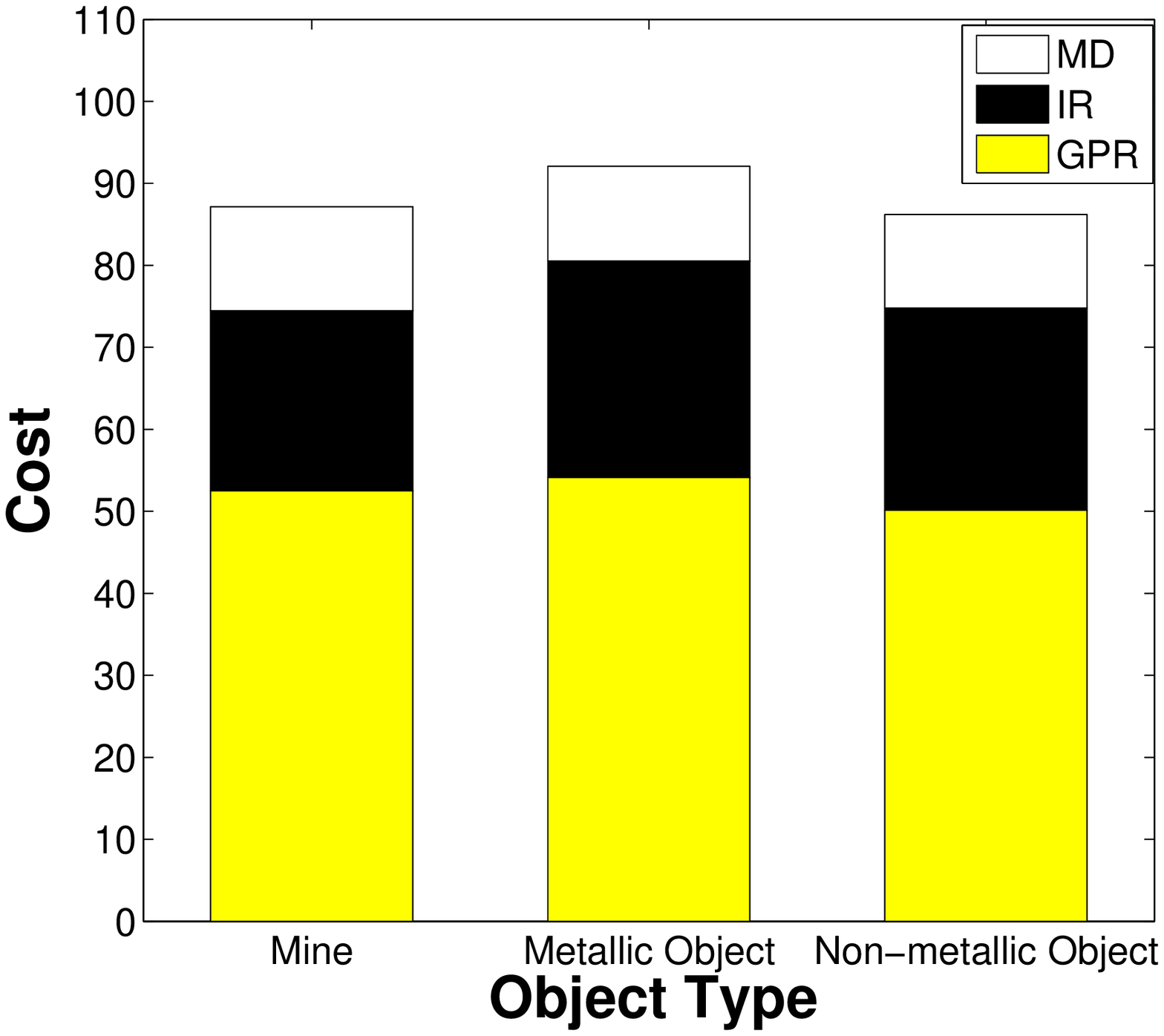}
    \hspace{-0.19in}
    &
    \hspace{-0.19in}
    \includegraphics[width=1.5in]{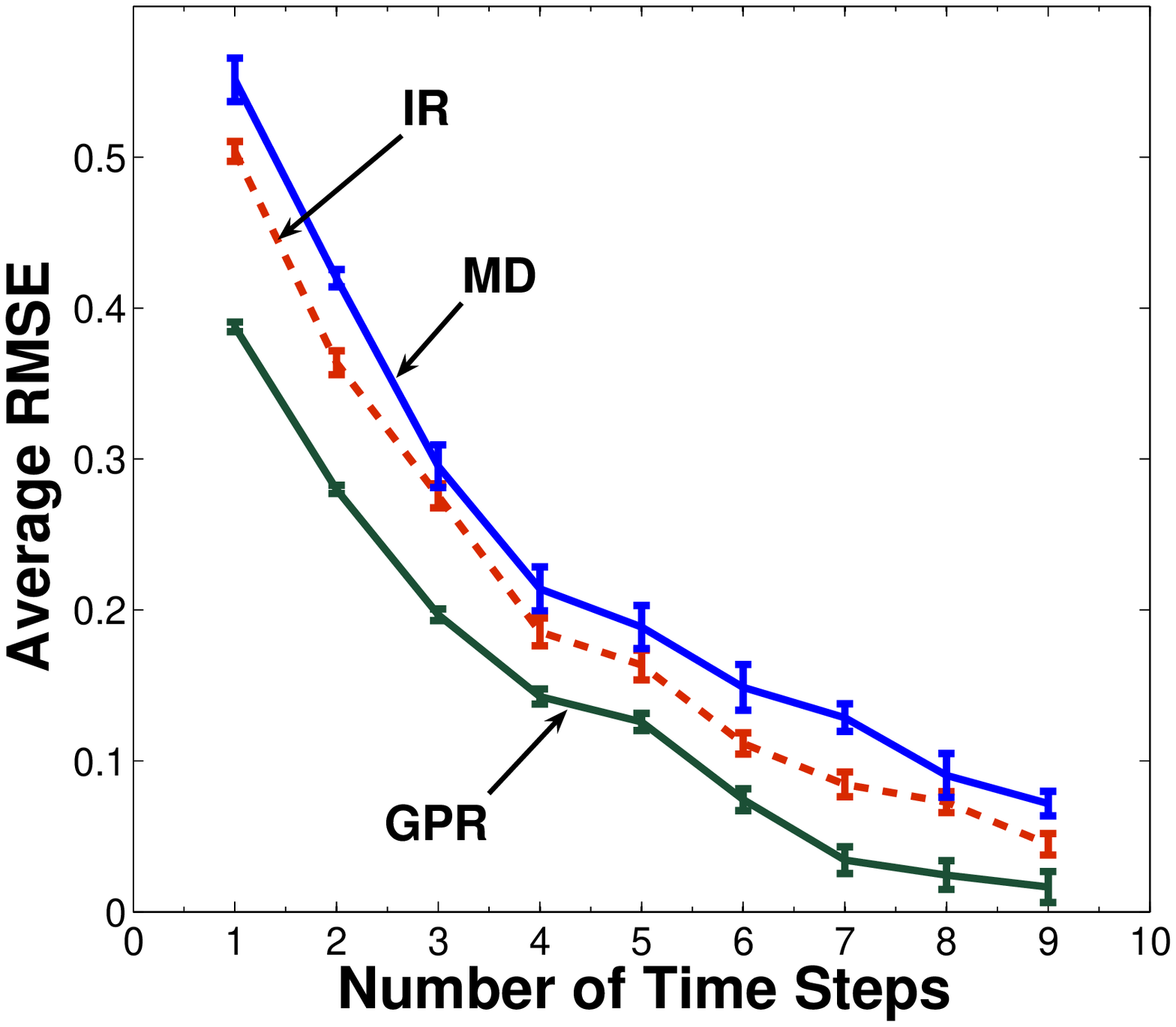}
    \hspace{-0.15in}\\
    \hspace{0.8in}a & \hspace{0.8in}b & \hspace{0.8in}c & \hspace{0.8in}d
\end{tabular}
    \caption{RMSE for different values of $w_{bel}$(a), Average sensors' utilities for different sensor types(b), Cost for different object types(c), RMSE for sensors' reports averaged over sensor types(d).}
\label{fig_sensor}
\end{center}
\end{figure*}

For our first group of experiments we analyze the performance
of our technique w.r.t. the variables in our model, such as $w_{bel}$ and time,
and, w.r.t. to sensor
and object types. We observe that as more information gets sensed
for the object, the RMSE value, shown in Figure \ref{fig_sensor}(a), decreases
over time. It takes on average $6-8$ time steps to
predict the object type with $95\%$ or greater accuracy depending on the object type
 and the value of $w_{bel}$. We also observe that our model
performs the best with $w_{bel} = 0.5$ (in Equation \ref{belief_update}),
when the agent equally
incorporates its private signal and also the market's aggregated belief
at each time step into its own belief update.
Figure \ref{fig_sensor}(b) shows the average utility of the
agents based on their type. We can see that MD sensors get
more utility because their costs of calculating and submitting reports
are generally less, whereas
GPR sensors get the least utility because they encounter the highest cost.
This result is further verified in Figure \ref{fig_sensor}(c) where we can see
the costs based on sensor types and also based on object types.
We observe that detecting a metallic
object that is not a mine has the highest cost. We posit that it is
because both MD and IR sensors can detect metallic content in the object
and extra cost is due
to the time and effort spent differentiating metallic object from a mine. Although
most of the mines are metallic \cite{Nada03,Nada01}, we can see that the cost of detecting a
mine and a non-metallic object are similar because we require a  prediction of at least $95\%$.
Due to the sensitive nature of the landmine detection problem,
it is important to ensure that even a non-metallic object
is not a mine even if we encounter higher costs.
However, despite MD's high utility (Figure \ref{fig_sensor}(b)) and
low cost (Figure \ref{fig_sensor}(c)), its error of classifying the object type is the largest, as
can be seen from Figure \ref{fig_sensor}(d).

In Table \ref{table_num_sensors}, we show how the decision
maker's decisions using our prediction market technique
results in the deployment of different numbers and types
of sensors over the time window of the object. We report
the results for the value of belief update
weight $w_{bel}=0.5$(used in Equation \ref{belief_update})
while using our prediction market model, as well
as using D-S and DDF. We see that non-metallic
object classification requires less number of sensors as
both MD and IR sensors can distinguish between
metallic vs. non-metallic objects, and so, deploying
just these two types of sensors can help to
infer that the object is not a mine. In contrast,
metallic objects require more time to get
classified as not being a mine because more object
features using all three sensor types need to be
observed. We also observe that on average our aggregation technique
using prediction market deploys a total of $6-8$ sensors and detects the object type
with at least $95\%$ accuracy
in $6-7$ time steps, while the next best compared DDF technique
deploys a total of $7-9$ sensors and detects the object type
with at least $95\%$ accuracy in $7-8$ time steps.

\begin{figure*}[ht!]
\begin{center}
\begin{tabular}{lll}
\hspace{-0.1in}
    \includegraphics[width=1.7in]{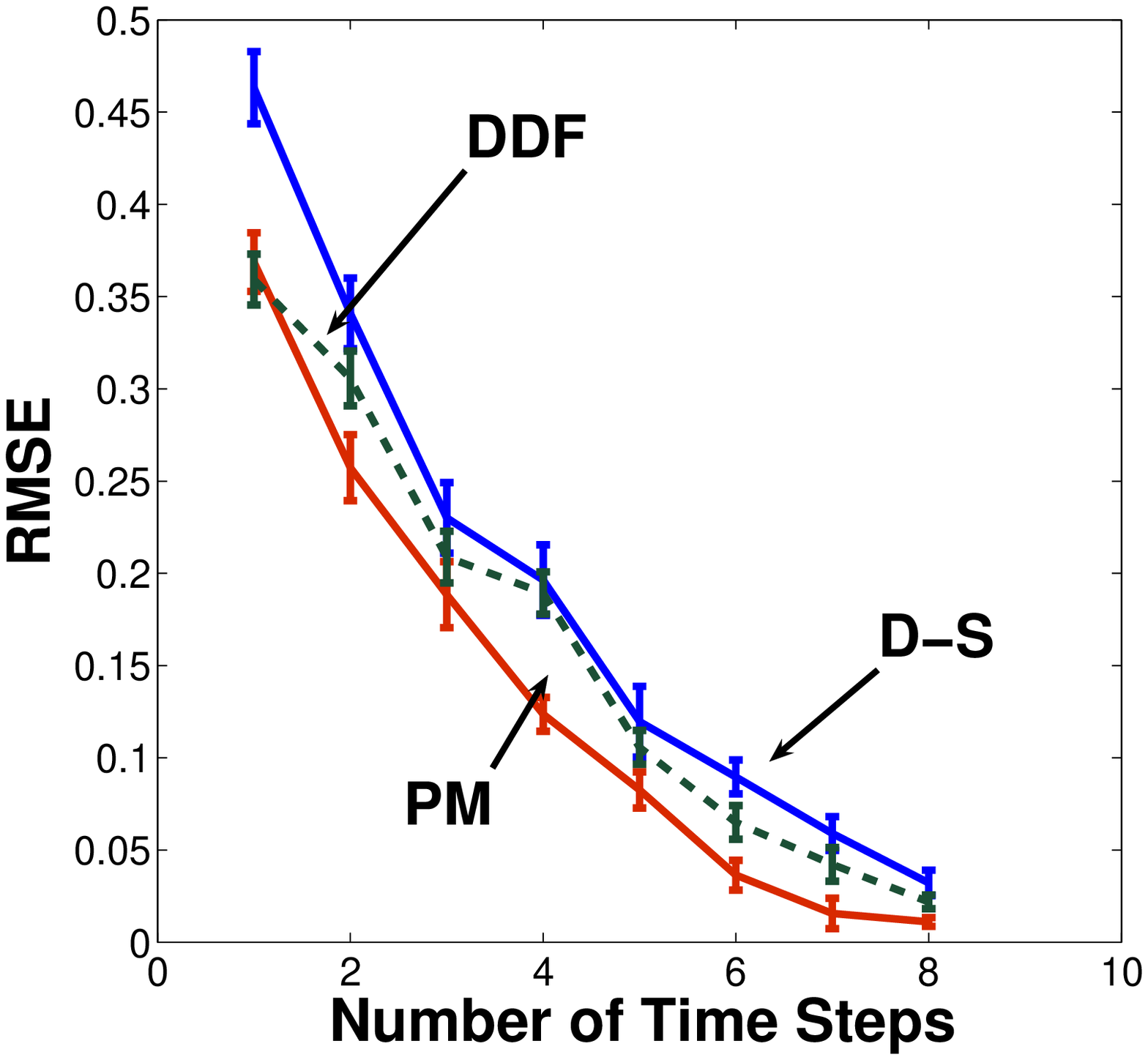}
    \hspace{-0.1in}
    &
    \hspace{-0.1in}
    \includegraphics[width=1.7in]{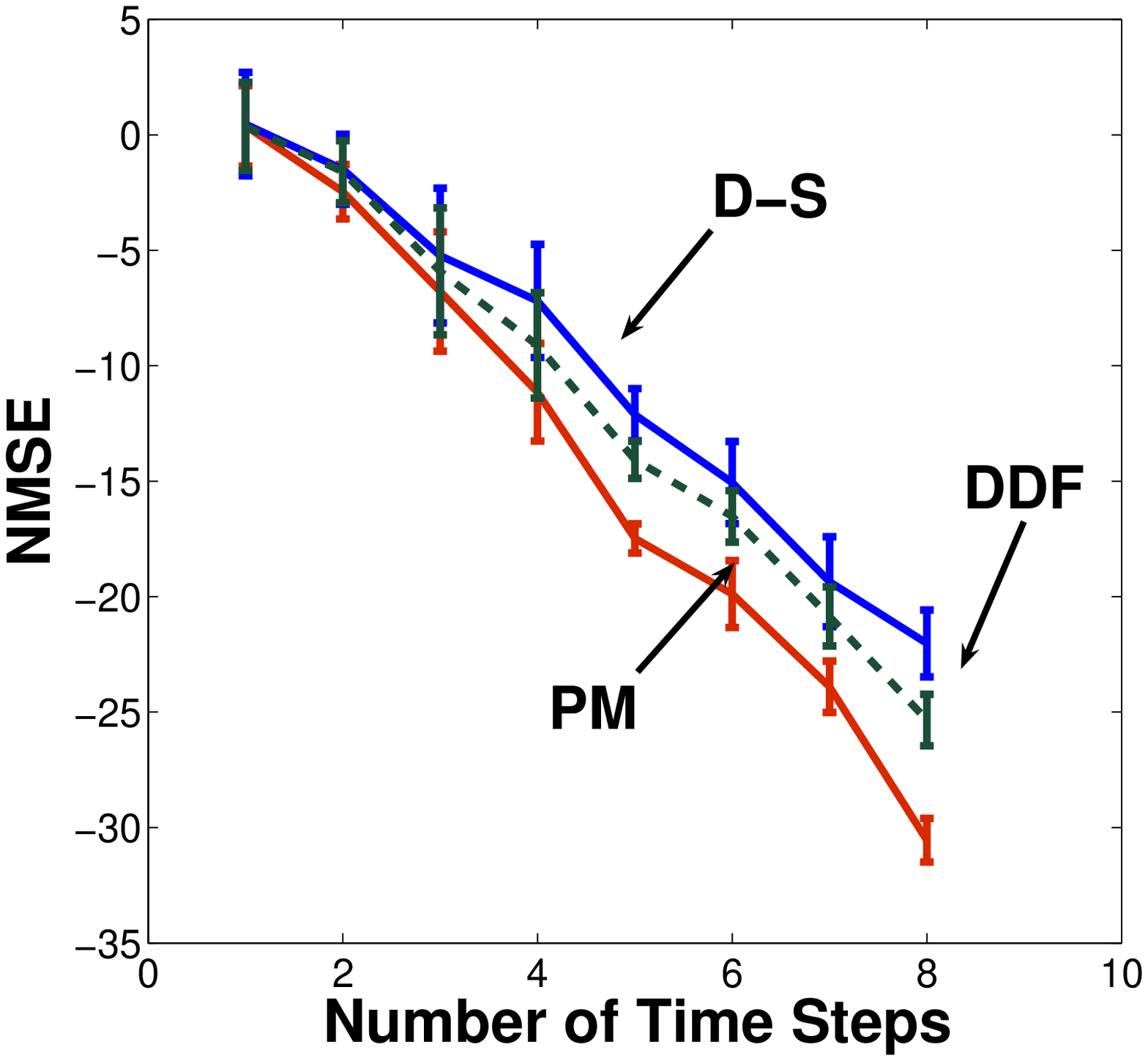}
    \hspace{-0.1in}
    &
    \hspace{-0.1in}
    \includegraphics[width=1.7in]{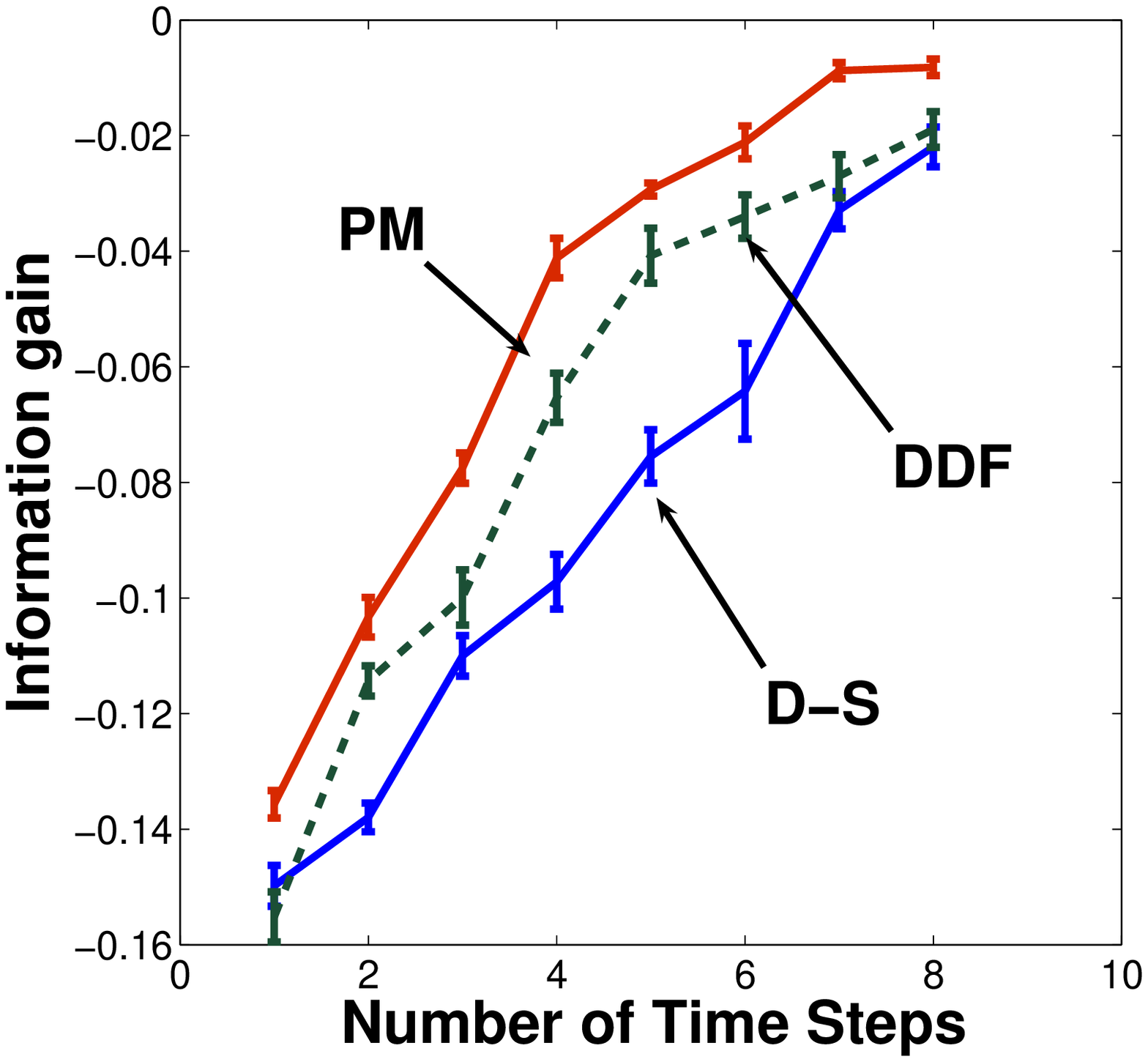}
    \hspace{-0.1in}\\
    \hspace{0.8in}a & \hspace{0.8in}b & \hspace{0.8in}c
\end{tabular}
    \caption{Comparison of our Prediction Market-based information aggregation with Dempster-Shafer and Distributed Data Fusion Techniques using different metrics: RMSE(a), NMSE(b), Information gain(c).}
\label{fig_comparison}
\end{center}
\end{figure*}

Our results shown in Figure \ref{fig_comparison}(a)
illustrate that the RMSE using our PM-based technique is
below the RMSEs using D-S and DDF by an average of
$8\%$ and $5\%$ respectively. Figure \ref{fig_sensor}(b) shows
that the NMSE values using our PM-based technique
is $18\%$ and $23\%$ less on average than D-S and DDF techniques respectively.
Finally, in Figure \ref{fig_sensor}(c) we observe that the information gain
for our PM-based technique is $12\%$ and $17\%$ more than
D-S and DDF methods respectively.

\section{Conclusions}
In this paper, we have described  a sensor information
aggregation technique for object classification
with a multi-agent prediction market and
developed a payment function used by the market maker
to incentivize truthful revelation by each agent.
Currently, the rewards given by the market maker
agent to the sensor agents are additional side payments
incurred by the decision maker. In the future we plan to investigate
a payment function that can achieve budget balance.
We are also interested in integrating our decision making problem
with the problem of scheduling robots(sensors), and,
incorporating the costs to the overall system into the decision-making costs.
Another direction we plan to investigate in the future
is a problem of minimizing the time to detect an object in
addition to the accuracy of detection.
Lastly, we plan to incorporate our aggregation technique
into the experiments with real robots.

\bibliographystyle{abbrv}
\bibliography{refs}

\end{document}